\newcommand*{\QEDB}{\hfill\ensuremath{\square}}%
\newtheorem{assumption}{Assumption}
\newtheorem{definition}{Definition}
\newtheorem{lemma}{Lemma}
\newtheorem{theorem}{Theorem}
\newtheorem{proposition}{Proposition}
\begin{document}

\title{\bf On Fixed-Time Stability for a Class of Singularly Perturbed Systems using Composite Lyapunov Functions}

\author{Michael Tang\thanks{ M. Tang and J. I. Poveda are with the Department of Electrical and Computer Engineering, University of California, San Diego, La Jolla, CA, 92093. Emails:\{myt001@ucsd.edu;~poveda@ucsd.edu\}.}, Miroslav Krstic\thanks{M. Krstic is with the Department of Mechanical and Aerospace Engineering, University of California, San Diego, La Jolla, CA, 92093. Email:\{krstic@ucsd.edu\}.}, Jorge I. Poveda \thanks{Research supported in part by NSF Grant CMMI No. 2228791 and AFOSR YIP FA9550-22-1-0211. Corresponding author: J. I. Poveda. NSF Grant ECCS Career 2305756}}

\maketitle
\begin{abstract}
Fixed-time stable dynamical systems are capable of achieving exact convergence to an equilibrium point within a fixed time that is independent of the initial conditions of the system. This property makes them highly appealing for designing control, estimation, and optimization algorithms in applications with stringent performance requirements. However, the set of tools available for analyzing the \emph{interconnection} of fixed-time stable systems is rather limited compared to their asymptotic counterparts. In this paper, we address some of these limitations by exploiting the emergence of multiple time scales in nonlinear singularly perturbed dynamical systems, where the fast dynamics and the slow dynamics are fixed-time stable on their own. By extending the so-called composite Lyapunov method from asymptotic stability to the context of fixed-time stability, we provide a novel class of Lyapunov-based sufficient conditions to certify fixed-time stability in a class of singularly perturbed dynamical systems. The results are illustrated, analytically and numerically, using a fixed-time gradient flow system interconnected with a fixed-time plant and an additional high-order example.
\end{abstract}

\section{Introduction}
\subsection{Literature Review}
A variety of complex dynamical systems that emerge in control, learning, and optimization can be decomposed into the interconnection of simpler sub-systems operating on different time scales. A typical approach to assess the stability properties of such systems relies on singular perturbation theory, which studies dynamics of the following form:
\begin{subequations}\label{eq:sys1}
\begin{align}
\dot{x}&=f(x,z,t,\varepsilon)\label{eq:xdynamics},~~~x(t_0)=x_0\\
    \varepsilon \dot{z}&=g(x,z,t,\varepsilon),~~~z(t_0)=z_0,
\end{align}
\end{subequations}
where $\varepsilon$ is a small parameter that induces a time scale separation between the 
``slow'' state $x$ and the ``fast'' state $z$. Singular perturbation tools for the study of dynamical systems of the form \eqref{eq:sys1} were introduced around the 1960's by Vasil'eva, Tikhonov, and Krylov \cite{vasil1978singular,tikhonov1952systems,Krylov1}, and later extended \cite{KokotovicSPBook,khalil}. Recent technical surveys include \cite{SPSurvey2012,HistorySingularPerturbations}.     

Among the various tools available for studying singularly perturbed systems, the composite Lyapunov method, introduced by Saberi and Khalil in \cite{saberi1984quadratic}, has gained widespread popularity due to its versatility. This method leverages Lyapunov functions to examine the stability of both the reduced dynamics and the boundary layer dynamics within the system. In particular, as demonstrated in \cite[Thm. 1]{saberi1984quadratic}, under additional interconnection conditions, if the reduced dynamics and the boundary layer dynamics accommodate a general class of quadratic-type Lyapunov functions, one can establish asymptotic stability for the original ``interconnected" singularly perturbed system, provided that $\varepsilon$ is sufficiently small. Furthermore, when Lyapunov functions adhere to quadratic bounds, it becomes possible to attain exponential stability results as well, as shown in \cite[Thm. 2]{saberi1984quadratic}. These findings, commonly referred to as the \emph{composite Lyapunov method}, have played a crucial role in the analysis and design of controllers and algorithms across various domains, including distributed optimization \cite{kia2013singularly}, extremum seeking control \cite{krstic_book_ESC}, nested control of power systems \cite{subotic2020lyapunov}, and the control of aerospace systems \cite{narang2014nonlinear}.

On the contrary, the range of tools available for investigating ``fixed-time stability" in singularly perturbed systems is rather limited. The concept of fixed-time stability, first introduced in \cite{andrieu2008homogeneous} and extensively explored over the last decade \cite{Fixed_timeTAC,lopez2018finite,lopez2019conditions}, is particularly intriguing due to its capacity to address control \cite{polyakov2023finite}, optimization \cite{fixed_time}, learning \cite{nonsmoothesc,poveda2022fixed}, and estimation \cite{rios2017time} challenges within a predetermined finite time interval that can remain independent of the system's initial conditions. Although there are ample Lyapunov conditions available in the literature for establishing fixed-time stability, the toolkit for examining \emph{interconnected} fixed-time stable systems primarily applies to systems that satisfy specific homogeneity conditions \cite{andrieu2008homogeneous,mendoza2023stability}, or certain conditions resembling linearity, paired with discrete-time dynamics \cite{lei2022event}. This limitation raises the question of whether the composite Lyapunov method can also be employed to investigate fixed-time stability in singularly perturbed systems.

\subsection{Contributions}
In this paper, we provide an affirmative response to the question posed above by introducing a fixed-time stability result for a class of singularly perturbed nonlinear dynamical systems based on the composite Lyapunov method. Specifically, we demonstrate that if the reduced dynamics and the boundary layer dynamics of the original system each possess individual fixed-time Lyapunov functions, then the original system will also exhibit fixed-time stability, provided a set of appropriate interconnection conditions are met and $\varepsilon$ is sufficiently small. Consequently, our result can be regarded as a fixed-time counterpart to the results presented by Saberi and Khalil in \cite{saberi1984quadratic} for asymptotic stability in locally Lipschitz systems. To simplify our presentation and due to space limiations, we focus our attention on time-invariant systems with a right-hand side independent of $\varepsilon$. This model effectively recovers the setting considered in \cite[Ch. 11.5]{khalil}, but we do not assume Lipschitz continuity in the vector fields. To exemplify our results, we present two distinct examples inspired by existing findings and applications of fixed-time stability. First, we investigate fixed-time gradient flows interconnected with a fixed-time stable plant. Such interconnections are prevalent in the analysis and design of various control architectures, where the plant operates at a faster time scale compared to the controller. We demonstrate that when $\varepsilon$ is sufficiently small such interconnections exhibit fixed-time stability when the cost functions are quadratic and strongly convex. This result can be interpreted as a robustness property of the fixed-time gradient flows studied in \cite{fixed_time}  with respect to ``parasitic'' fixed-time stable dynamics. Subsequently, and taking inspiration from \cite{8550571}, we explore a class of second-order fixed-time stable systems interconnected with another fixed-time stable dynamical system that evolves at a faster time scale. We establish that such systems also fulfill our core assumptions, further illustrated through numerical simulations.

The rest of this paper is organized as follows: Section \ref{sec:preliminaries} presents some preliminaries and auxiliary Lemmas. Section \ref{sec:mainresults} presents the main results of the paper. Section \ref{sec:examples} demonstrates how our results can be applied to a fixed-time gradient flow system interconnected with a fixed-time plant. Section \ref{additionalexample} presents an additional high-order example. Finally, Section \ref{sec:conclusions} ends with the conclusions.
\section{Preliminaries}
\label{sec:preliminaries}
\subsection{Notation and Auxiliary Lemmas}
We use $\mathbb{R}_{>0}$ to denote the set of positive real numbers, and $\mathbb{N}$ to denote the set of positive integers. Given a matrix $Q\in\mathbb{R}^{N\times N}$ with real eigenvalues, we define $\overline{\lambda}(Q)$ and $ \underline{\lambda}(Q)$ to be its largest and smallest eigenvalue, respectively. Similarly, we use $\overline{\sigma}(Q)$ and $\underline{\sigma}(Q)$ to denote its largest and smallest singular value, respectively. A continuous function $\rho:\mathbb{R}_{\geq0}\to\mathbb{R}_{\geq0}$ is said to be of class $\mathcal{K}$ if it satisfies $\rho(0)=0$ and is strictly increasing. It is said to be of class $\mathcal{K}_{\infty}$ if it is of class $\mathcal{K}$ and, additionally, it grows unbounded. A function $\beta:\mathbb{R}_{\geq0}\times\mathbb{R}_{\geq0}\to\mathbb{R}_{\geq0}$ is said to be of class $\mathcal{K}\mathcal{L}$ if it is of class $\mathcal{K}_{\infty}$ in its first argument and for each $r>0$, $\beta(r,\cdot)$ is non-increasing and $\lim_{s\to\infty}\beta(r,s)=0$.

\vspace{0.1cm}
\subsection{Fixed-Time Stable Systems}
In this paper, we consider time-invariant dynamical systems of the form
\begin{equation}\label{basic}
    \dot{\zeta}=f(\zeta), \ \ \ \zeta(0)=\zeta_0,
\end{equation}
where $f:\mathbb{R}^N\to\mathbb{R}^N$ is a continuous function, $\zeta\in\mathbb{R}^N$ is the state of the system, and $\zeta_0\in\mathbb{R}^N$ is the initial condition. Note that we do not require $f$ to be differentiable or even Lipschitz. 

\vspace{0.1cm}
\begin{definition}
System \eqref{basic} is said to render the origin $\zeta=0$ \emph{globally finite-time stable} if there exists a (generalized) class $\mathcal{K}\mathcal{L}$ function $\beta$ and a continuous function $T:\mathbb{R}^N\to\mathbb{R}_{\geq0}$ (called the settling time function) such that every solution of \eqref{basic} satisfies:
\begin{equation}
|\zeta(t)|\leq \beta(|\zeta(0)|,t),~~~\forall~t \geq0,
\end{equation}
and $\beta(|\zeta(0)|,t)=0$ for all $t\geq T(\zeta(0))$. System \eqref{basic} is said to render the origin $\zeta=0$ \emph{globally fixed-time stable} if, additionally, there exists $T^*>0$ such that $T(\zeta(0))\leq T^*$ for all $\zeta(0)\in\mathbb{R}^N$. \QEDB 
\end{definition}

\vspace{0.1cm}
The following Lemma corresponds to \cite[Lemma 1]{Fixed_timeTAC}. The converse result is also established in \cite[Thm.2]{8550571} when the settling time $T$ is continuous.

\vspace{0.1cm}
\begin{lemma}\label{definitionfixLyapunov}
Suppose there exists a smooth function $V:\mathbb{R}^N\to\mathbb{R}$ that is positive definite, radially unbounded, and satisfies:
\begin{equation*}
    \dot{V}:=\langle\nabla V(\zeta),f(\zeta)\rangle\le -c_1 V(\zeta)^{p_1}-c_2 V(\zeta)^{p_2},~~\forall~\zeta\in\mathbb{R}^N,
\end{equation*} 
for some $c_1, c_2>0$, $p_1\in(0,1)$ and $p_2>1$. Then, the origin $\zeta=0$ is globally fixed-time stable for the dynamics \eqref{basic}, and the settling time function satisfies
\begin{equation}
T(\zeta_0)\le \frac{1}{c_1(1-p_1)}+\frac{1}{c_2(p_2-1)}
\end{equation}
for all $\zeta_0\in\mathbb{R}^N$. \QEDB 
\end{lemma}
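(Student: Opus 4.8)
The plan is to reduce the $N$-dimensional differential inequality to a scalar comparison argument along trajectories and then integrate explicitly. First I would fix an arbitrary initial condition $\zeta_0 \in \mathbb{R}^N$ and let $\zeta(\cdot)$ be a solution of \eqref{basic}. Define $v(t) := V(\zeta(t))$. Since $V$ is smooth and $f$ is continuous, $v$ is absolutely continuous along the solution and satisfies $\dot v(t) \le -c_1 v(t)^{p_1} - c_2 v(t)^{p_2}$ for almost every $t$, by the chain rule and the hypothesis. Because $V$ is positive definite and radially unbounded, $v(t) = 0$ if and only if $\zeta(t) = 0$, so it suffices to show that $v$ reaches $0$ within the claimed time bound and stays there; invariance of the origin follows since the right-hand side of the comparison inequality vanishes at $v = 0$.

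Next I would invoke a scalar comparison lemma: if $y(t)$ solves $\dot y = -c_1 y^{p_1} - c_2 y^{p_2}$ with $y(0) = v(0) \ge 0$, then $0 \le v(t) \le y(t)$ for all $t \ge 0$ on the interval of existence. Hence the settling time of $v$ is bounded above by the settling time of $y$. To bound the latter, I would estimate the time to go from $y(0)$ down to $1$ using only the $c_2 y^{p_2}$ term (which dominates when $y \ge 1$), and the time to go from $1$ down to $0$ using only the $c_1 y^{p_1}$ term (which dominates when $y \le 1$). Concretely, on $\{y \ge 1\}$ we have $\dot y \le -c_2 y^{p_2}$, so separating variables and integrating from $y(0)$ to $1$ gives an elapsed time at most $\tfrac{1}{c_2(p_2-1)}(1 - y(0)^{1-p_2}) \le \tfrac{1}{c_2(p_2-1)}$, uniformly in $y(0)$. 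On $\{0 < y \le 1\}$ we have $\dot y \le -c_1 y^{p_1}$, and integrating from $1$ down to $0$ gives elapsed time at most $\tfrac{1}{c_1(1-p_1)}$. Adding the two pieces yields $T(\zeta_0) \le \tfrac{1}{c_1(1-p_1)} + \tfrac{1}{c_2(p_2-1)}$, independent of $\zeta_0$, which is exactly the asserted fixed-time bound; global finite-time stability in the $\mathcal{KL}$ sense then follows from standard converse-type arguments (or one simply notes that this Lemma is quoted from \cite{Fixed_timeTAC}, so a short self-contained version of the above suffices).

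The main technical obstacle is the lack of Lipschitz continuity of $f$, which means solutions of \eqref{basic} need not be unique and one cannot immediately appeal to the classical comparison lemma as stated for locally Lipschitz right-hand sides. I would address this by either (i) applying a version of the comparison principle valid for continuous (not necessarily Lipschitz) scalar majorants — noting that $y \mapsto -c_1 y^{p_1} - c_2 y^{p_2}$ is continuous and the relevant one-sided estimates only require an upper bound on $\dot v$ — or (ii) arguing directly: on any interval where $v(t) > 0$, $v$ is strictly decreasing, so $t \mapsto v(t)$ is invertible there, and I can write $dt = \tfrac{dv}{\dot v} \le \tfrac{-dv}{c_1 v^{p_1} + c_2 v^{p_2}}$ and integrate this explicitly in $v$, splitting at $v = 1$ as above. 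Approach (ii) is cleanest and avoids invoking any comparison lemma at all; the only care needed is to confirm that once $v$ hits $0$ it remains $0$, which again uses that the origin is an equilibrium of the comparison dynamics together with the sign of $\dot v$.
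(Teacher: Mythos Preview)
The paper does not actually prove this lemma; it is quoted directly from \cite[Lemma~1]{Fixed_timeTAC} and stated without proof. Your proposed argument---reducing to the scalar comparison ODE, splitting the integration at $v=1$, and bounding each piece by the dominant power---is correct and is precisely the standard proof that appears in the cited reference, so there is nothing to add.
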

%
%
%
\section{Main results}
\label{sec:mainresults}
In this paper, we consider a sub-class of systems of the form \eqref{eq:sys1}, given by:
\begin{subequations}\label{eq:sys}
\begin{align}
    \dot{x}&=f(x,z)\label{eq:xdynamics}\\
    \varepsilon \dot{z}&=g(x,z),
\end{align}
\end{subequations}
with $\varepsilon>0$, $x\in\mathbb{R}^N, z\in\mathbb{R}^{M}$, and continuous functions $f,g$. We assume that the origin is an equilibrium point for \eqref{eq:sys}, i.e., $f(0,0)=0$ and $g(0,0)=0$. 

\vspace{0.1cm}
Our goal is to study fixed-stability properties of the origin $x=0$ and $z=0$ for system \eqref{eq:sys}, based on the stability properties of a simpler ``reduced'' system that considers $z$ to be at steady state, and the stability properties of the ``boundary-layer'' dynamics that model the initial fast evolution of $z$. In order to introduce these systems, we make the following assumption, which is standard in the literature of singular perturbation theory, see \cite[Ch. 11]{khalil}.

\vspace{0.1cm}
\begin{assumption}\label{assumption:quasimanifold}
There exists a continuously differentiable function $h:\mathbb{R}^N\to\mathbb{R}^M$ such that: 1) $0=g(x,z)$ if and only if $z=h(x)$, for all $x\in\mathbb{R}^N$; 2) $|h(x)|\leq \zeta(|x|)$ for some $\zeta\in\mathcal{K}$, and for all $x\in\mathbb{R}^N$. \QEDB  
\end{assumption}

\vspace{0.1cm}
Using Assumption \ref{assumption:quasimanifold}, we introduce the boundary-layer dynamics of system \eqref{eq:sys}. To do this, we can consider a new state $y = z-h(x)$, which leads to:
\begin{subequations}\label{newvariablesystem}
\begin{align}
    \dot{x}&=f(x,y+h(x)),\\
    \dot{y}&=\frac{1}{\varepsilon}g(x, y+h(x))-\frac{\partial h}{\partial x}f(x, y+h(x)).\label{ydot}
    \end{align}
\end{subequations}
Considering a new time scale $\tau:=t/\varepsilon$, we obtain the following dynamics evolving on the $\tau$-time domain:
\begin{equation}
\frac{\partial y}{\partial\tau}=g(x, y+h(x))-\varepsilon\frac{\partial h}{\partial x}f(x, y+h(x)).
\end{equation}
Setting $\varepsilon=0$, we obtain the \emph{boundary layer dynamics}:
\begin{equation}\label{bl}
    \frac{\partial y}{\partial\tau}=g(x, y+h(x)),
\end{equation}
where $x$ is treated as a fixed parameter. 

Similarly, the \emph{reduced dynamics} of \eqref{eq:sys} are obtained by setting $\dot{z}=0$ and substituting $z=h(x)$ in \eqref{eq:xdynamics}, leading to
\begin{equation}\label{red}
    \dot{x}=f(x,h(x)).
\end{equation}
We now make the following stability assumption on the reduced dynamics and the boundary layer dynamics.
\begin{assumption}\label{assumption1}
There exists a smooth function $V:\mathbb{R}^N\to\mathbb{R}_{\geq0}$ and functions $\alpha_1,\alpha_2\in\mathcal{K}_{\infty}$ such that
\begin{align*}
&~~~~~~~~~~~~\alpha_1(|x|)\leq V(x)\leq \alpha_2(|x|),\\
&\frac{\partial V(x)}{\partial x}f(x,h(x)) \leq -k_1 V(x)^{a_1}-k_2 V(x)^{a_2},
\end{align*}
for all $x\in\mathbb{R}^N$, where $a_1\in(0,1)$, $a_2>1$, $k_1,k_2>0$, and $p$ comes from Assumption \ref{assumption:quasimanifold}.  \QEDB 
\end{assumption}
\begin{assumption}\label{assumptionboundary}
There exists a smooth function $W:\mathbb{R}^N\times\mathbb{R}^M\to\mathbb{R}_{\geq0}$, and functions $\tilde{\alpha}_1,\tilde{\alpha}_2\in\mathcal{K}_{\infty}$ such that:
\begin{align*}
&~~~~~~~~~~~\tilde{\alpha}_1(|y|)\leq W(x,y)\leq\tilde{\alpha}_2(|y|),\\
&\frac{\partial W}{\partial y} g(x, y+h(x))\le -\kappa_1 W(x,y)^{b_1}-\kappa_2 W(x,y)^{b_2},\label{redblfxt}
\end{align*}
for all $x\in\mathbb{R}^N$, $y\in\mathbb{R}^M$, where $b_1\in(0,1)$, $b_2>1$, and $\kappa_1,\kappa_2>0$.
\end{assumption}

%

\vspace{0.1cm}
While  Assumptions \ref{assumption1}-\ref{assumptionboundary} imply that both the reduced and the boundary layer dynamics are fixed-time stable, in general, this condition is not sufficient to guarantee that system \eqref{eq:sys} will also be fixed-time stable. To establish this property, we need to study the following additional interconnection terms:
\begin{align*}
        I_1(x,y):&=\frac{\partial V(x)}{\partial x} \Big(f(x, y+h(x))-f(x, h(x))\Big)\\
        I_2(x,y):&=\left(\frac{\partial W(x,y)}{\partial x}-\frac{\partial W(x,y)}{\partial y} \frac{\partial h(x)}{\partial x}\right)f(x, y+h(x)),
\end{align*}
which, in general, cannot be bounded using standard linear or quadratic terms, as in e.g., \cite[Ch. 11.5]{khalil} because $f$ and $g$ are not Lipschitz. Instead, to bound $I_1,I_2$, we introduce the following terms:
\begin{subequations}\label{definitionsproof}
\begin{align}
        \tilde{V}(x)&:=V(x)^{\frac{a_1}{2}}+ V(x)^{\frac{a_2}{2}}\\
        \tilde{W}(x, y)&:=W(x,y)^{\frac{b_1}{2}}+ W(x,y)^{\frac{b_2}{2}}\\
        \underline{k}&:=\min\{k_1, k_2\},~~\underline{\kappa}:=\min\{\kappa_1, \kappa_2\},
\end{align}
\end{subequations}
where the positive constants $k_1,k_2,\kappa_1,\kappa_2$ come from Assumptions \ref{assumption1}-\ref{assumptionboundary}. Using these terms, we can now state our main stability result for the singularly perturbed system \eqref{eq:sys}, which extends the composite Lyapunov method of \cite[Thm. 11.3]{khalil} from asymptotic stability to fixed-time stability. Due to space limitations, all proofs are omitted.

\vspace{0.1cm}
\begin{theorem}\label{thm1}
Suppose that Assumptions \ref{assumption:quasimanifold}-\ref{assumptionboundary} hold, and that
there exist $\chi_1, \delta_1, \chi_2, \delta_2, c_1, c_2\in\mathbb{R}$ such that: 
\begin{enumerate}[(a)]
\item For all $x\in\mathbb{R}^N$, $y\in\mathbb{R}^M$, $i\in\{1,2\}$: 
\begin{align*}
\hspace{-0.4cm}I_i(x,y)&\le \chi_i \tilde{V}(x)\tilde{W}(x,y)+\delta_i \tilde{V}(x)^2+c_i \tilde{W}(x,y)^2
\end{align*}
\item At least one of the following inequalities holds:
\begin{equation*}
    \delta_1<\frac12 \underline{k},~~~~\text{or}~~~~\delta_2<0.
\end{equation*}
\end{enumerate}

Then, there exists $\varepsilon^*>0$ such that for all $\varepsilon\in(0,\varepsilon^*)$ the origin of \eqref{eq:sys} is fixed-time stable. \QEDB 
\end{theorem}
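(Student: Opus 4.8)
The plan is to construct a composite Lyapunov function of the form $\nu(x,y) = (1-d)\,V(x) + d\,W(x,y)$ for a suitable weight $d\in(0,1)$ to be chosen, mimicking the classical composite Lyapunov argument of \cite[Thm. 11.3]{khalil}, but tracking the non-quadratic fixed-time dissipation terms $-k_iV^{a_i}$ and $-\kappa_iW^{b_i}$ instead of quadratic ones. Differentiating $\nu$ along solutions of \eqref{eq:sys} (equivalently along \eqref{newvariablesystem}), the $\dot V$ contribution splits as $\frac{\partial V}{\partial x}f(x,h(x)) + I_1(x,y)$, which by Assumption \ref{assumption1} and hypothesis (a) is bounded by $-k_1V^{a_1}-k_2V^{a_2} + \chi_1\tilde V\tilde W + \delta_1\tilde V^2 + c_1\tilde W^2$. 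Similarly the $\dot W$ contribution splits into the fast term $\frac{1}{\varepsilon}\frac{\partial W}{\partial y}g(x,y+h(x))$, bounded via Assumption \ref{assumptionboundary} by $-\frac{1}{\varepsilon}(\kappa_1W^{b_1}+\kappa_2W^{b_2})$, plus exactly $I_2(x,y)$, bounded using hypothesis (a). The key algebraic observation is that $k_1V^{a_1}+k_2V^{a_2}\ge \underline k\,(V^{a_1}+V^{a_2}) \ge \underline k\,\tfrac{1}{?}\tilde V^2$ needs care: $\tilde V^2 = V^{a_1}+2V^{(a_1+a_2)/2}+V^{a_2}$, and by Lemma \ref{middlepower} the cross term $V^{(a_1+a_2)/2}\le V^{a_1}+V^{a_2}$, so $\tilde V^2 \le 3(V^{a_1}+V^{a_2})$; hence $-k_1V^{a_1}-k_2V^{a_2}\le -\tfrac{\underline k}{3}\tilde V^2$, and analogously $-\kappa_1W^{b_1}-\kappa_2W^{b_2}\le -\tfrac{\underline\kappa}{3}\tilde W^2$. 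This converts the dissipation into a quadratic form in the vector $(\tilde V,\tilde W)$.

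Collecting terms, $\dot\nu$ is bounded above by $-(\tilde V,\tilde W)\,\Lambda(\varepsilon,d)\,(\tilde V,\tilde W)^\top$ where $\Lambda$ is the $2\times 2$ matrix with $(1,1)$ entry $(1-d)\big(\tfrac{\underline k}{3}-\delta_1\big) - d\,\delta_2$, $(2,2)$ entry $\tfrac{d\,\underline\kappa}{3\varepsilon} - (1-d)c_1 - d\,c_2$, and off-diagonal $-\tfrac12\big((1-d)\chi_1 + d\,\chi_2\big)$. The standard singular-perturbation choice of $d$ (as in Khalil) makes the $(1,1)$ entry strictly positive using hypothesis (b): if $\delta_1<\tfrac12\underline k$ one can take $d$ small enough that the $\delta_1$ term dominates (here we even have the better constant $\tfrac{\underline k}{3}$ — one may need to absorb the factor by noting Lemma \ref{wamgm}/Karamata give sharper bounds, or simply strengthen (b) accordingly; I would state the argument with the constant that actually emerges), while if $\delta_2<0$ then the $-d\delta_2$ term is already positive for any $d$. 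With the $(1,1)$ entry fixed positive, the $(2,2)$ entry is made arbitrarily large by shrinking $\varepsilon$, so for $\varepsilon<\varepsilon^*$ the determinant $\Lambda_{11}\Lambda_{22}-\Lambda_{12}^2>0$ and $\Lambda$ is positive definite. Therefore $\dot\nu \le -\lambda_{\min}(\Lambda)\,(\tilde V^2 + \tilde W^2) \le -\tfrac{\lambda_{\min}(\Lambda)}{2}\big(\tilde V + \tilde W\big)^2 \cdot \tfrac{1}{?}$, and since $\tilde V + \tilde W \ge V^{a_1/2}+W^{b_1/2}$ (the small-power parts) and also dominates $V^{a_2/2}+W^{b_2/2}$, one recovers a bound of the form $\dot\nu \le -c\,\nu^{p_1} - c\,\nu^{p_2}$ for suitable $p_1<1<p_2$, after relating $\tilde V,\tilde W$ to powers of $\nu = (1-d)V + dW$ via elementary inequalities (e.g. $\nu^{q}\le C(V^q+W^q)$ and $V^q+W^q \le C(\tilde V^{2q/a_1}+\cdots)$ type estimates, using monotonicity and Lemma \ref{middlepower}). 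Invoking Lemma \ref{definitionfixLyapunov} on $\nu$ then yields fixed-time stability of the origin of \eqref{eq:sys}, with an explicit settling-time bound.

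The main obstacle I anticipate is the final step: converting the quadratic-in-$(\tilde V,\tilde W)$ dissipation estimate back into the additive form $-c_1\nu^{p_1}-c_2\nu^{p_2}$ required by Lemma \ref{definitionfixLyapunov}. The issue is that $\tilde V^2+\tilde W^2$ is a sum of six monomials $V^{a_1},V^{(a_1+a_2)/2},V^{a_2}$ and their $W$-analogues, with mixed exponents straddling $1$, whereas $\nu$ is a linear combination of $V$ and $W$; extracting clean lower bounds of $\nu^{p_1}$ (governed by the slowest-decaying, i.e. largest-when-small, term, which is the $a_1$ or $b_1$ power) and $\nu^{p_2}$ (governed by the $a_2$ or $b_2$ power) requires a careful case analysis on the relative sizes of $V$ and $W$, plus repeated use of Karamata's inequality (Lemma \ref{karamata}) and the AM-GM inequality (Lemma \ref{wamgm}) to handle the cross-monomials. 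A secondary subtlety is bookkeeping the constant $\tfrac{\underline k}{3}$ versus the $\tfrac12\underline k$ appearing in hypothesis (b); I would either reconcile this by a sharper estimate of $\tilde V^2$ in terms of $V^{a_1}+V^{a_2}$ near the origin and at infinity separately, or simply verify that the weaker constant still suffices once $d$ is chosen small, since hypothesis (b) only needs to guarantee strict positivity of $\Lambda_{11}$ for some admissible $d$.
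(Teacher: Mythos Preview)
Your approach is essentially the same as the paper's: composite Lyapunov function $\theta V + (1-\theta)W$, reduction of $\dot\Psi$ to a quadratic form in $(\tilde V,\tilde W)$, positive definiteness of the $2\times 2$ matrix via condition (b) and small $\varepsilon$, then conversion of $-\underline\lambda(P)(\tilde V^2+\tilde W^2)$ into a fixed-time Lyapunov inequality for $\Psi$. Two points deserve correction or sharpening.

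\textbf{The constant $\tfrac{\underline k}{3}$ is a genuine gap, not a bookkeeping nuisance.} Your estimate $\tilde V^2\le 3(V^{a_1}+V^{a_2})$ via Lemma~\ref{middlepower} yields the $(1,1)$ entry $(1-d)(\tfrac{\underline k}{3}-\delta_1)-d\delta_2$. If hypothesis (b) holds only through $\delta_1<\tfrac12\underline k$ with $\delta_1\in(\tfrac{\underline k}{3},\tfrac{\underline k}{2})$ and $\delta_2\ge 0$, this entry is never positive and the argument fails. The fix is not a case analysis but simply a sharper inequality: by AM--GM (Lemma~\ref{wamgm}), $V^{(a_1+a_2)/2}\le\tfrac12(V^{a_1}+V^{a_2})$, so $\tilde V^2=V^{a_1}+2V^{(a_1+a_2)/2}+V^{a_2}\le 2(V^{a_1}+V^{a_2})$, giving exactly $-k_1V^{a_1}-k_2V^{a_2}\le-\tfrac{\underline k}{2}\tilde V^2$. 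This is what the paper does (it writes the bound as $-2\underline k\big(\tfrac{V^{a_1/2}+V^{a_2/2}}{2}\big)^2$), and it is precisely why the threshold in (b) is $\tfrac12\underline k$.

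\textbf{The final conversion step is cleaner than you anticipate.} From $\dot\Psi\le-\underline\lambda(P)(\tilde V^2+\tilde W^2)$, drop the nonnegative cross terms to get $\dot\Psi\le-\underline\lambda(P)(V^{a_1}+V^{a_2}+W^{b_1}+W^{b_2})$. Now pick any $\gamma_1\in(\max(a_1,b_1),1)$ and $\gamma_2\in(1,\min(a_2,b_2))$; by Lemma~\ref{middlepower}, $V^{a_1}+V^{a_2}>V^{\gamma_j}$ and $W^{b_1}+W^{b_2}>W^{\gamma_j}$ for $j=1,2$, so $\dot\Psi\le-\tfrac{\underline\lambda(P)}{2}(V^{\gamma_1}+W^{\gamma_1}+V^{\gamma_2}+W^{\gamma_2})$. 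For the convex power $\gamma_2$, Jensen gives $V^{\gamma_2}+W^{\gamma_2}\ge 2^{1-\gamma_2}(V+W)^{\gamma_2}$; for the concave power $\gamma_1$, Karamata (Lemma~\ref{karamata}) with $(V+W,0)\succ(\max(V,W),\min(V,W))$ gives $V^{\gamma_1}+W^{\gamma_1}\ge(V+W)^{\gamma_1}$. Since $\Psi\le V+W$, this yields $\dot\Psi\le-\tfrac{\underline\lambda(P)}{2}\big(\Psi^{\gamma_1}+2^{1-\gamma_2}\Psi^{\gamma_2}\big)$ and Lemma~\ref{definitionfixLyapunov} finishes. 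No case analysis on the relative sizes of $V$ and $W$ is needed; the key device is passing to \emph{common} exponents $\gamma_1,\gamma_2$ first, so that only two clean inequalities (one convex, one concave) remain.
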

\section{Fixed-Time Gradient Flows Interconnected With a Fixed-Time Stabilized Plant}
\label{sec:examples}
Here we present an important application of Theorem \ref{thm1} in the context of gradient-flows interconnected with fixed-time stabilized plants. Such types of interconnections (in the context of asymptotic stability) are common across multiple applications in control and real-time optimization.
\subsection{Model and Analysis}
Consider the following fixed-time gradient flow interconnected with a fixed-time plant:
\begin{subequations}\label{gradsp}
    \begin{align}
        \dot{x}&=-k\left(\frac{\nabla \phi(z)}{|\nabla \phi(z)|^{\xi_1}}+\frac{\nabla \phi(z)}{|\nabla \phi(z)|^{\xi_2}}\right)\label{fixedtimeflow}\\
        \varepsilon \dot{z}&=Az+Bu,~~~~u=\sigma(z,x),~ \label{fixedtimefilter}
    \end{align}
\end{subequations}
where $x, z\in\mathbb{R}^N$, $\xi_1\in(0,1)$, $\xi_2<0$, $k>0$, $\nabla\phi$ is the gradient of a cost function $\phi$, and $\sigma$ is a feedback law to be designed. To simplify our presentation, we assume that the matrix $B$ is square and non-singular. The goal is to steer the plant \eqref{fixedtimefilter}, in a fixed time, towards a particular input $u^*=\sigma(x^*)$ that optimizes the cost function $\phi$. This setting describes a standard model-based real-time steady-state optimization problem, drawing inspiration from the ideas in \cite{BIANCHIN2022110579}.

We consider feedback laws of the form $u=u_1+u_2$, which have two main components given by the following smooth and non-smooth terms:
\begin{subequations}\label{fbklaws}
    \begin{align}
        u_1&=-B^{-1}Ax\\
        u_2&=B^{-1}\left(-\nu\frac{z-x}{|z-x|^{\xi_1}}-\nu\frac{z-x}{|z-x|^{\xi_2}}\right),
    \end{align}
\end{subequations}
where $\nu>0$. We assume that the cost $\phi:\mathbb{R}^N\to\mathbb{R}$ is quadratic and has the form:  
$$\phi(x)=\frac12 x^\top Q x+b^\top x+c,$$
where $Q\succ 0$ and $c\in\mathbb{R}$. Without loss of generality, we can disregard the linear term $b^\top x$ since the system can be transformed in a way such that the minimum of $\phi$ lies at the origin. In this way, our goal is equivalent to stabilizing, in a fixed time, the origin of system \eqref{gradsp}.
%
%

%
The following proposition is proved via a sequence of lemmas that show that all the assumptions of Theorem \ref{thm1} are satisfied by system \eqref{gradsp} under the feedback laws \eqref{fbklaws}.

\vspace{0.1cm}
\begin{proposition}
Consider the closed-loop system \eqref{gradsp}-\eqref{fbklaws} with $\nu>\frac{\overline{\sigma}(QA)}{\underline{\lambda}(Q)}$, $k>0, \ \xi_1\in(0,1)$ and $\xi_2<0$. Then, there exists $\varepsilon^*>0$ such that for all $\varepsilon\in(0,\varepsilon^*)$, the origin is fixed-time stable. \QEDB 
\end{proposition}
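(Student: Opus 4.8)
The plan is to verify, one by one, the three assumptions and the two interconnection conditions of Theorem~\ref{thm1} for the closed-loop system \eqref{gradsp}--\eqref{fbklaws}, and then simply invoke the theorem. First I would identify the reduced manifold: substituting $u=u_1+u_2$ into \eqref{fixedtimefilter} gives $\varepsilon\dot z = Az+Bu_1+Bu_2 = -\nu\big(\tfrac{z-x}{|z-x|^{\xi_1}}+\tfrac{z-x}{|z-x|^{\xi_2}}\big)$, whose unique zero is $z=x$, so $h(x)=x$, which is $C^1$ and satisfies $|h(x)|=|x|$, giving Assumption~\ref{assumption:quasimanifold} immediately with $\zeta(r)=r$. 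Then the boundary-layer variable is $y=z-h(x)=z-x$, and on the $\tau$-scale the boundary-layer dynamics are $\partial y/\partial\tau = -\nu\big(\tfrac{y}{|y|^{\xi_1}}+\tfrac{y}{|y|^{\xi_2}}\big)$; using $W(x,y)=\tfrac12|y|^2$ one computes $\tfrac{\partial W}{\partial y}g = -\nu(|y|^{2-\xi_1}+|y|^{2-\xi_2}) = -\nu 2^{(2-\xi_1)/2}W^{(2-\xi_1)/2}-\nu 2^{(2-\xi_2)/2}W^{(2-\xi_2)/2}$, so Assumption~\ref{assumptionboundary} holds with $b_1=(2-\xi_1)/2\in(0,1)$ (since $\xi_1\in(0,1)$) and $b_2=(2-\xi_2)/2>1$ (since $\xi_2<0$), and $\tilde\alpha_i$ quadratic.

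Next I would treat the reduced dynamics: with $h(x)=x$, \eqref{red} becomes $\dot x = -k\big(\tfrac{\nabla\phi(x)}{|\nabla\phi(x)|^{\xi_1}}+\tfrac{\nabla\phi(x)}{|\nabla\phi(x)|^{\xi_2}}\big)$, which is exactly the fixed-time gradient flow of \cite{fixed_time} for the strongly convex quadratic $\phi(x)=\tfrac12 x^\top Q x+c$ (the linear term having been removed). Taking $V(x)=\phi(x)-c=\tfrac12 x^\top Q x$, we get $\tfrac{\partial V}{\partial x}f(x,h(x)) = -k\big(|\nabla\phi(x)|^{2-\xi_1}+|\nabla\phi(x)|^{2-\xi_2}\big)$, and since $|\nabla\phi(x)|^2 = x^\top Q^2 x \in[\underline\lambda(Q),\overline\lambda(Q)]\cdot x^\top Q x$, this bounds above by $-k_1 V^{a_1}-k_2 V^{a_2}$ with $a_1=(2-\xi_1)/2\in(0,1)$, $a_2=(2-\xi_2)/2>1$, and $k_1,k_2>0$ depending on $\underline\lambda(Q),\overline\lambda(Q)$; the quadratic sandwich $\alpha_1(|x|)\le V(x)\le\alpha_2(|x|)$ is clear. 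So Assumption~\ref{assumption1} holds. I would note the convenient fact that $a_i=b_i$ here, which simplifies the choice of $\gamma_1,\gamma_2$ in the theorem's proof but is not needed.

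The substantive part is bounding the interconnection terms $I_1,I_2$ to get conditions (a) and (b). For $I_2$: since $W$ does not depend on $x$, $\tfrac{\partial W}{\partial x}=0$; and $\tfrac{\partial W}{\partial y}\tfrac{\partial h}{\partial x}=y^\top\cdot I = y^\top$, while $f(x,y+h(x)) = -k\big(\tfrac{\nabla\phi(y+x)}{|\nabla\phi(y+x)|^{\xi_1}}+\tfrac{\nabla\phi(y+x)}{|\nabla\phi(y+x)|^{\xi_2}}\big)$, so $I_2 = -y^\top f(x,y+x)$; the point is to bound $|f|$ in terms of $\tilde V(x)$ and $\tilde W(x,y)$. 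Here $\nabla\phi(y+x)=Q(x+y)$, so $|f|\le k\big(|Q(x+y)|^{1-\xi_1}+|Q(x+y)|^{1-\xi_2}\big)$, and $|Q(x+y)|\le |Qx|+|Qy|$ which one can relate to $\sqrt{V(x)}$ and $\sqrt{W(x,y)}$ up to constants; then $|y|=\sqrt{2W}\sim \tilde W$-controllable, and the subadditivity Lemma~\ref{middlepower} lets one absorb the mixed exponents into the template $\chi_2\tilde V\tilde W+\delta_2\tilde V^2+c_2\tilde W^2$ with $\delta_2$ as small (in particular negative, or at least not obstructing) as desired — this is where I expect to spend the most care, tracking exponents $1-\xi_1$, $1-\xi_2$ against $a_i/2$, $b_i/2$. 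For $I_1 = \tfrac{\partial V}{\partial x}\big(f(x,y+x)-f(x,x)\big) = x^\top Q\big(f(x,y+x)-f(x,x)\big)$, one uses that the map $v\mapsto \tfrac{v}{|v|^{\xi}}$ is (locally, with controlled growth) Hölder-type continuous, so $|f(x,y+x)-f(x,x)|$ is controlled by powers of $|Qy|$, hence by powers of $W$, and $|x^\top Q|\le\sqrt{2\overline\lambda(Q)}\sqrt{V}$; again Lemma~\ref{middlepower} packages everything into $\chi_1\tilde V\tilde W+\delta_1\tilde V^2+c_1\tilde W^2$. The key deliverable is that the $\tilde V^2$-coefficient $\delta_1$ can be taken arbitrarily small (in particular $\delta_1<\tfrac12\underline k$), since in $I_1$ every surviving term genuinely carries at least one factor of $\tilde W$ or is lower order — there is no ``pure $\tilde V^2$'' obstruction because $f(x,y+x)-f(x,x)\to 0$ as $y\to 0$. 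The condition $\nu>\overline\sigma(QA)/\underline\lambda(Q)$ should enter when one is not allowed to simply drop the linear term, i.e., in handling the cross terms between the $Az$ part and the optimization — more precisely, it is what guarantees the boundary-layer contraction rate $\nu$ dominates the destabilizing effect of $A$ filtered through the cost geometry; I would insert it at the step where the $Az+Bu_1$ cancellation is verified to be exact (it is, by construction of $u_1$) and where the residual coupling through $\tfrac{\partial h}{\partial x}f$ is estimated. Once (a) and (b) are in hand, Theorem~\ref{thm1} yields $\varepsilon^*>0$ and fixed-time stability of the origin for all $\varepsilon\in(0,\varepsilon^*)$, completing the proof. \QEDB
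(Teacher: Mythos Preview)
Your computation of the boundary-layer dynamics is wrong, and this is where the whole argument loses contact with the stated hypothesis on $\nu$. You write $\varepsilon\dot z = Az+Bu_1+Bu_2 = -\nu\big(\tfrac{z-x}{|z-x|^{\xi_1}}+\tfrac{z-x}{|z-x|^{\xi_2}}\big)$, i.e., you claim $Az+Bu_1=0$. But $Bu_1=-Ax$, so $Az+Bu_1=A(z-x)$, which vanishes only on the manifold $z=x$. Consequently the boundary-layer dynamics in the $\tau$-scale are
\[
\frac{dy}{d\tau}=Ay-\nu\Big(\frac{y}{|y|^{\xi_1}}+\frac{y}{|y|^{\xi_2}}\Big),
\]
exactly as in the paper's \eqref{gradbl}, and the drift $Ay$ must be absorbed by the fixed-time terms. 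This is precisely the role of the assumption $\nu>\overline\sigma(QA)/\underline\lambda(Q)$: the paper takes $W(y)=\tfrac12 y^\top Q y$ (not $\tfrac12|y|^2$), computes $\tfrac{\partial W}{\partial y}g = y^\top QAy-\nu\,\tfrac{y^\top Qy}{|y|^{\xi_1}}-\nu\,\tfrac{y^\top Qy}{|y|^{\xi_2}}$, bounds $y^\top QAy\le\overline\sigma(QA)|y|^2$ and $y^\top Qy\ge\underline\lambda(Q)|y|^2$, and then uses Lemma~\ref{middlepower} with $2-\xi_1<2<2-\xi_2$ to dominate the linear term, yielding $\kappa_i=2^{1-\xi_i/2}\overline\lambda(Q)^{-1+\xi_i/2}(\nu\underline\lambda(Q)-\overline\sigma(QA))>0$. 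With your choice $W=\tfrac12|y|^2$ the same manipulation produces the condition $\nu>\overline\lambda\big(\tfrac{A+A^\top}{2}\big)$, which is \emph{not} the hypothesis of the proposition; so your proof, as written, would not establish the statement. Your later remark that ``the $Az+Bu_1$ cancellation is exact (it is, by construction of $u_1$)'' is simply false and should be deleted.

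Beyond this, your plan for the interconnection terms is in the right spirit but too vague at the decisive step. The paper does not rely on generic H\"older continuity of $v\mapsto v/|v|^{\xi}$ together with Lemma~\ref{middlepower}; it introduces the functions $\Upsilon_i(x,y)=x^\top\big(\tfrac{x}{|x|^{\xi_i}}-\tfrac{x+y}{|x+y|^{\xi_i}}\big)$, proves sharp bounds $|\Upsilon_1|\le 2^{\xi_1}|x||y|^{1-\xi_1}$ and $|\Upsilon_2|\le\Delta(\xi_2)|x||y|(|x|^{-\xi_2}+|y|^{-\xi_2})$ (Lemma~\ref{lem1}), and then uses a tailored weighted AM--GM splitting (Lemma~\ref{lem2}) to convert the resulting products $|x|^{p_1}|y|^{p_2}$ into $\tfrac{1}{\underline\alpha(q)}|x|^p+\overline\alpha(q)|xy|^{p/2}+\tfrac{1}{\underline\alpha(q)}|y|^p$ with an adjustable parameter $q$. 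It is this free parameter that lets one push the $\tilde V^2$-coefficient $\delta_1$ below $\tfrac12\underline k$; your sketch asserts this is possible ``since every surviving term carries at least one factor of $\tilde W$'', but the $\xi_2<0$ case generates terms like $|x|^{1-\xi_2}|y|$ whose $x$-exponent exceeds $2$, and getting those into the $\tilde V\tilde W$/$\tilde W^2$ template without an uncontrolled $\tilde V^2$ contribution is exactly what Lemma~\ref{lem2} is engineered to do. You should either reproduce that mechanism or supply an explicit alternative.
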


\vspace{0.1cm}
\textsl{Proof:} Using $y=z-x$, we obtain the following boundary layer dynamics (in the $\tau$-time scale) for system \eqref{gradsp}:
\begin{equation*}
    \frac{dy}{d\tau}=Ay-\nu\left(\frac{y}{|y|^{\xi_1}}+\frac{y}{|y|^{\xi_2}}\right)
\end{equation*}
We can establish the following Lemma, which follows directly by computation.
\begin{lemma}
The reduced system and the boundary layer system satisfy Assumptions \ref{assumption1}-\ref{assumptionboundary} with
\begin{equation}\label{lyapunovfunctions}
V(x)=\frac{1}{2}x^\top Q x,~~~W(y)=\frac12 y^\top Q y,
\end{equation}
and constants 
\begin{align*}
k_1&=2^{1-\frac{\xi_1}{2}}k\underline{\lambda}(Q)^2 \overline{\lambda}(Q)^{-1-\frac{\xi_1}{2}}\\
k_2&=2^{1-\frac{\xi_2}{2}}k\underline{\lambda}(Q)^{2-\xi_2} \overline{\lambda}(Q)^{-1+\frac{\xi_2}{2}}\\
\kappa_i&=2^{1-\frac{\xi_i}{2}}\overline{\lambda}(Q)^{-1+\frac{\xi_i}{2}}(\nu\underline{\lambda}(Q)-\overline{\sigma}(QA))
\end{align*}
and $a_i=b_i=1-\frac{\xi_i}{2}$. \QEDB 
\end{lemma}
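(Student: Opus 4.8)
The plan is to verify the two stability assumptions by direct computation, exploiting that the quasi-steady-state manifold is $h(x)=x$, i.e.\ the zero of $g(x,z)=A(z-x)-\nu\big(\tfrac{z-x}{|z-x|^{\xi_1}}+\tfrac{z-x}{|z-x|^{\xi_2}}\big)$. With this, the reduced dynamics \eqref{red} read $\dot x=-k\big(\tfrac{Qx}{|Qx|^{\xi_1}}+\tfrac{Qx}{|Qx|^{\xi_2}}\big)$ (using $\nabla\phi(x)=Qx$), and the boundary-layer dynamics are \eqref{gradbl}. Taking $V,W$ as in \eqref{lyapunovfunctions}, the two-sided bounds required by Assumptions \ref{assumption1}--\ref{assumptionboundary} hold with $\alpha_1(s)=\tilde\alpha_1(s)=\tfrac12\underline\lambda(Q)s^2$ and $\alpha_2(s)=\tilde\alpha_2(s)=\tfrac12\overline\lambda(Q)s^2$, both of class $\mathcal K_\infty$. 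The substance of the proof is to produce the two decay estimates with the exact constants claimed.

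For the reduced dynamics, since $\nabla V(x)=Qx$ I would compute $\dot V=(Qx)^\top\dot x=-k\big(|Qx|^{2-\xi_1}+|Qx|^{2-\xi_2}\big)$. The key algebraic step is the sandwich $2\underline\lambda(Q)^2\overline\lambda(Q)^{-1}V\le|Qx|^2\le2\overline\lambda(Q)V$, whose lower bound follows from $|Qx|\ge\underline\lambda(Q)|x|$ together with $|x|^2\ge2V/\overline\lambda(Q)$, and whose upper bound follows from $x^\top Q^2x\le\overline\lambda(Q)\,x^\top Qx$. Writing $|Qx|^{2-\xi_i}=|Qx|^2\,(|Qx|^2)^{-\xi_i/2}$ and noting $a_i=1-\tfrac{\xi_i}{2}$, I would lower-bound each term: for $\xi_1\in(0,1)$ the exponent $-\tfrac{\xi_1}{2}<0$ forces the upper bound on the second factor and the lower bound on the first, yielding $k_1$; for $\xi_2<0$ the exponent $-\tfrac{\xi_2}{2}>0$, so both factors use the lower bound, yielding $k_2$. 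This reproduces $k_1,k_2$ and $a_i=1-\tfrac{\xi_i}{2}$ exactly.

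For the boundary-layer dynamics, with $\nabla_y W(y)=Qy$ I would compute $\dot W=y^\top QAy-\nu(y^\top Qy)\big(|y|^{-\xi_1}+|y|^{-\xi_2}\big)$. Bounding the sign-indefinite cross term by $y^\top QAy\le\overline\sigma(QA)|y|^2$ and each dissipation term by $(y^\top Qy)|y|^{-\xi_i}\ge\underline\lambda(Q)|y|^{2-\xi_i}$ gives $\dot W\le\overline\sigma(QA)|y|^2-\nu\underline\lambda(Q)\big(|y|^{2-\xi_1}+|y|^{2-\xi_2}\big)$. Since $\xi_1>0>\xi_2$ we have $2-\xi_1<2<2-\xi_2$, so Lemma \ref{middlepower} yields $|y|^2\le|y|^{2-\xi_1}+|y|^{2-\xi_2}$; collecting terms gives $\dot W\le-(\nu\underline\lambda(Q)-\overline\sigma(QA))\big(|y|^{2-\xi_1}+|y|^{2-\xi_2}\big)$, whose coefficient is positive precisely because $\nu>\overline\sigma(QA)/\underline\lambda(Q)$. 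Finally, converting back via $|y|^{2-\xi_i}=(|y|^2)^{1-\xi_i/2}\ge(2W/\overline\lambda(Q))^{1-\xi_i/2}$ produces $\kappa_i=2^{1-\xi_i/2}\overline\lambda(Q)^{-1+\xi_i/2}(\nu\underline\lambda(Q)-\overline\sigma(QA))$ and $b_i=1-\tfrac{\xi_i}{2}$, as claimed.

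The step I expect to be the main obstacle --- the only one beyond bookkeeping --- is the absorption of the cross term in the boundary layer: the perturbation $y^\top QAy$ is a pure quadratic ($\sim|y|^2$), whereas the available dissipation lives at the two fixed-time exponents $2-\xi_1$ and $2-\xi_2$ that straddle $2$. Lemma \ref{middlepower} is exactly what dominates the quadratic by the sum of the two dissipation powers, and it is here that the gain condition $\nu>\overline\sigma(QA)/\underline\lambda(Q)$ enters. A secondary point requiring care is matching the precise reduced-dynamics constants, which hinges on selecting, for each of the fractional ($\xi_1$) and negative ($\xi_2$) powers, the correct one-sided eigenvalue bound on $|Qx|^2$.
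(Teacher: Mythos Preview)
Your proposal is correct and is precisely the direct computation the paper alludes to (the paper gives no explicit proof, only the remark that the lemma ``follows directly by computation''). Your splitting $|Qx|^{2-\xi_1}=|Qx|^2\cdot(|Qx|^2)^{-\xi_1/2}$ with mixed upper/lower eigenvalue bounds is exactly what is needed to reproduce the paper's (slightly non-tight) constant $k_1$, and your invocation of Lemma~\ref{middlepower} to absorb the quadratic cross term $y^\top QAy$ into the two fixed-time dissipation powers is the natural step that makes the gain condition $\nu>\overline\sigma(QA)/\underline\lambda(Q)$ appear.
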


\vspace{0.1cm}
Next, to verify that the interconnection conditions of Theorem \ref{thm1} are satisfied by system \eqref{gradsp}, we introduce the following functions $\Upsilon_i:\mathbb{R}^N\times\mathbb{R}^N\to\mathbb{R}$, for $i\in\{1, 2\}$:
\begin{subequations}\label{upi}
\begin{align}
    \Upsilon_i(x,y)&=x^\top\left(\frac{x}{|x|^{\xi_i}}-\frac{y+x}{|y+x|^{\xi_i}}\right)\\
    \Upsilon(x,y)&=\Upsilon_1(x,y)+\Upsilon_2(x,y).
\end{align}
\end{subequations}
The following Lemma will be instrumental for our results:

\vspace{0.1cm}
\begin{lemma}\label{lem1}
Let $\xi_1\in(0,1)$ and $\xi_2<0$. Then, the following inequalities hold for all $x, y\in\mathbb{R}^N$:
    \begin{itemize}
    \item $\left\lvert\Upsilon_1(x,y)\right\rvert\le 2^{\xi_1}|x||y|^{1-\xi_1}$.
    \item
    $\left\lvert\Upsilon_2(x,y)\right\rvert\le \Delta(\xi_2)|x||y|\left(|x|^{-\xi_2}+|y|^{-\xi_2}\right)$,
    \end{itemize}
    where $\Delta(\xi_2)=1+\max\left(1,-\frac{\xi_2}{2^{\xi_2+1}}\right)$.\QEDB
\end{lemma}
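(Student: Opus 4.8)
## Proof Strategy for Lemma \ref{lem1}

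The plan is to bound each $\Upsilon_i$ separately by reducing to a scalar problem along the line through $x$ and $y+x$, exploiting the structure of the maps $v\mapsto v/|v|^{\xi}$.

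For the first bound, I would start by noting that $\Upsilon_1(x,y) = |x|^{2-\xi_1} - x^\top (y+x)/|y+x|^{\xi_1}$, and observe that $v \mapsto v/|v|^{\xi_1}$ with $\xi_1\in(0,1)$ is (radially) the gradient-type map associated to a convex function: in fact $v/|v|^{\xi_1} = \nabla\big(\tfrac{1}{2-\xi_1}|v|^{2-\xi_1}\big)$, and $g(v) := \tfrac{1}{2-\xi_1}|v|^{2-\xi_1}$ is convex since $2-\xi_1 \in (1,2)$. By monotonicity of the gradient of a convex function, $\langle \nabla g(x) - \nabla g(y+x),\, x - (y+x)\rangle \ge 0$, i.e. $\langle x/|x|^{\xi_1} - (y+x)/|y+x|^{\xi_1},\, -y\rangle \ge 0$. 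This does not immediately give $\Upsilon_1$, so instead I would get a quantitative two-sided handle by writing $\Upsilon_1(x,y) = \langle x,\, \nabla g(x) - \nabla g(y+x)\rangle$ and using the integral form $\nabla g(x)-\nabla g(y+x) = -\int_0^1 \nabla^2 g(y+x - sy)\, y\, ds$. Computing $\nabla^2 g(v) = |v|^{-\xi_1}\big(I - \xi_1 \frac{vv^\top}{|v|^2}\big)$, whose operator norm is at most $|v|^{-\xi_1}$ (since $\xi_1>0$), gives $|\Upsilon_1(x,y)| \le |x|\,|y|\int_0^1 |y+x-sy|^{-\xi_1}\,ds$. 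The remaining task is the scalar estimate $\int_0^1 |x+(1-s)y|^{-\xi_1}\,ds \le 2^{\xi_1}|y|^{-\xi_1}$: substituting $t = 1-s$ this is $\int_0^1 |x+ty|^{-\xi_1}\,dt$, and since the integrand is large only when $x+ty$ is near $0$, one bounds $|x+ty| \ge$ (distance stuff) — more simply, after projecting onto the direction of $y$ one reduces to $\int_{\mathbb{R}} (\alpha^2 + (\beta + |y| t)^2)^{-\xi_1/2}\,dt$ type integrals which are maximized in the worst alignment and yield the constant $2^{\xi_1}$; I would verify the constant by checking the extreme case $x = -\tfrac12 y$.

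For the second bound, the map $v\mapsto v/|v|^{\xi_2}$ with $\xi_2<0$ is not a convex gradient and $\nabla^2$ blows up as $|v|\to\infty$ rather than $|v|\to 0$, so I would argue more directly. Write $\Upsilon_2(x,y) = \langle x,\, x|x|^{-\xi_2} - (y+x)|y+x|^{-\xi_2}\rangle$. Add and subtract $x|y+x|^{-\xi_2}$ to split into $\langle x,x\rangle(|x|^{-\xi_2} - |y+x|^{-\xi_2}) + |y+x|^{-\xi_2}\langle x, -y\rangle$. The second piece is bounded by $|x||y||y+x|^{-\xi_2} \le |x||y|(|x|+|y|)^{-\xi_2}$, and since $-\xi_2 > 0$ one has $(|x|+|y|)^{-\xi_2} \le 2^{-\xi_2-1}(|x|^{-\xi_2}+|y|^{-\xi_2})$ by convexity of $t\mapsto t^{-\xi_2}$ when $-\xi_2\ge 1$ (and a cruder bound otherwise), contributing the $\max(1, -\xi_2/2^{\xi_2+1})$ term; the scalar inequality $\big||x|^{-\xi_2} - |y+x|^{-\xi_2}\big| \le$ something controlled by $|y|(|x|^{-\xi_2} + |y|^{-\xi_2})/\max(|x|,|y|) \cdot$(const) handles the first piece, using $||x|-|y+x||\le |y|$ and the mean value theorem on $t\mapsto t^{-\xi_2}$ (monotone increasing, so its derivative $-\xi_2 t^{-\xi_2-1}$ evaluated at the endpoints gives the bound). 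Assembling yields the stated $\Delta(\xi_2) = 1 + \max\big(1, -\xi_2/2^{\xi_2+1}\big)$.

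The main obstacle I anticipate is pinning down the exact constants rather than the qualitative bounds — in particular getting $2^{\xi_1}$ (not something larger) in the first inequality, which requires identifying the worst-case geometry $x \parallel -y$ with $|x| = |y|/2$ and evaluating the resulting one-dimensional integral carefully; and, for the second inequality, the bookkeeping needed to collapse several $O(1)$ factors into the single clean expression $\Delta(\xi_2)$, since the behavior of the convexity bound for $t\mapsto t^{-\xi_2}$ genuinely differs according to whether $-\xi_2 \ge 1$ or $-\xi_2 \in (0,1)$. The inequalities themselves are otherwise routine once the problem is reduced to scalar integrals/mean-value estimates along the segment joining $x$ and $y+x$.
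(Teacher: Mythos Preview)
Your approach to $\Upsilon_1$ via the Hessian integral representation is natural, but the key scalar estimate you state,
\[
\int_0^1 |x+(1-s)y|^{-\xi_1}\,ds \;\le\; 2^{\xi_1}\,|y|^{-\xi_1},
\]
is \emph{false}. At the very point $x=-\tfrac12 y$ that you propose as the extremal check, the integrand becomes $\bigl|\tfrac12-s\bigr|^{-\xi_1}|y|^{-\xi_1}$ and
\[
\int_0^1 \Bigl|\tfrac12-s\Bigr|^{-\xi_1}\,ds \;=\; \frac{2^{\xi_1}}{1-\xi_1},
\]
so your bound overshoots by the factor $1/(1-\xi_1)>1$. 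The loss occurs one step earlier: you replaced $\bigl|x^\top\nabla^2 g(v)\,y\bigr|$ by $|x|\,|y|\,\|\nabla^2 g(v)\|_{\mathrm{op}}=|x|\,|y|\,|v|^{-\xi_1}$, but along the direction of $v$ the Hessian eigenvalue is only $(1-\xi_1)|v|^{-\xi_1}$, and the collinear configuration---precisely the one that makes the integral largest---is exactly where this smaller eigenvalue is the relevant one. Without tracking that anisotropy you cannot recover the sharp constant $2^{\xi_1}$.

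The paper proceeds quite differently. It first applies Cauchy--Schwarz to reduce the claim to the H\"older-type estimate
\[
\left|\frac{x}{|x|^{\xi_1}}-\frac{z}{|z|^{\xi_1}}\right|\;\le\;2^{\xi_1}\,|z-x|^{1-\xi_1},\qquad z:=x+y,
\]
then squares both sides and splits into the two cases $x^\top z>0$ and $x^\top z<0$. In the first case homogeneity lets one normalize $|x|^{2-2\xi_1}+|z|^{2-2\xi_1}=1$ and finish with elementary inequalities; in the second case concavity of $(\cdot)^{1-\xi_1}$ together with an optimization over auxiliary weights yields the constant. This route never passes through the Hessian integral and obtains $2^{\xi_1}$ directly.

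For $\Upsilon_2$ the paper omits the argument, so there is nothing to compare against; your add--and--subtract decomposition is a reasonable starting point, but be aware that the mean-value step on $t\mapsto t^{-\xi_2}$ requires care when $-\xi_2-1<0$, since then $c^{-\xi_2-1}$ is large for \emph{small} $c$ and $|y+x|$ can be arbitrarily close to zero.
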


\vspace{0.1cm}

\vspace{0.1cm}
Using the functions \eqref{upi}, the interconnection terms of \eqref{gradsp} can be written as:
\begin{subequations}
\begin{align*}
I_1&=k\Upsilon(Qx, Qy)\\
I_2&=-k\Upsilon(Qy, Qx)+k\left(|Qy|^{2-\xi_1}+|Qy|^{2-\xi_2}\right).
\end{align*}
\end{subequations}
With Lemma \ref{lem1}, we can bound $|I_1|$ as follows:
\begin{align}
    |I_1|&\le k\left(|\Upsilon_1(Qx, Qy)|+|\Upsilon_2(Qx, Qy)|\right)\notag\\&\le N_1|x||y|^{1-\xi_1}+N_2|x||y|\left(|x|^{-\xi_2}+|y|^{-\xi_2}\right).\label{i1bd}
\end{align}
Similarly, using Lemma \ref{lem1} we can bound $|I_2|$ as follows: 
\begin{align}
    |I_2|&\le k\Big(|\Upsilon_1(Qy, Qx)|+|\Upsilon_2(Qy, Qx)|\notag\\
    &~~~~~~~~+|Qy|^{2-\xi_1}+|Qy|^{2-\xi_2}\Big)\notag\\&\le N_1|y||x|^{1-\xi_1}+N_2|x||y|\left(|x|^{-\xi_2}+|y|^{-\xi_2}\right)\notag\\&~~~+k\left(\overline{\lambda}(Q)^{2-\xi_1}|y|^{2-\xi_1}+\overline{\lambda}(Q)^{2-\xi_2}|y|^{2-\xi_2}\right),\label{i2bd}
\end{align}
where $N_1=2^{\xi_1}\overline{\lambda}(Q)^{2-\xi_1}k, \ N_2=\Delta(\xi_2)\overline{\lambda}(Q)^{2-\xi_2}k$. To continue, we will state another useful result.

\vspace{0.1cm}
\begin{lemma}\label{lem2}
    Given $p_1, p_2>0$ we can find $\underline{\alpha}, \overline{\alpha}\in\mathcal{K}_\infty$ such that the following holds $\forall x, y\in\mathbb{R}, q>0$:
    \begin{equation}\label{amgm}
        |x|^{p_1} |y|^{p_2}+|x|^{p_2} |y|^{p_1}\le \frac{1}{\underline{\alpha}(q)}|x|^{p}+\overline{\alpha}(q)|xy|^\frac{p}{2}+\frac{1}{\underline{\alpha}(q)}|y|^{p}
    \end{equation}
    where $p:=p_1+p_2.$\QEDB
\end{lemma}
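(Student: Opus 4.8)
The plan is to prove Lemma~\ref{lem2} by reducing it to a weighted AM-GM argument and then absorbing the ``bad'' terms into an adjustable $\mathcal{K}_\infty$-weighted Young-type inequality. First I would normalize: by homogeneity it suffices to control the ratio $\frac{|x|^{p_1}|y|^{p_2}+|x|^{p_2}|y|^{p_1}}{|x|^{p}+|y|^{p}+|xy|^{p/2}}$, or equivalently (assuming WLOG $p_1<p_2$, the case $p_1=p_2$ being trivial since the left side is then $2|xy|^{p/2}$) to understand how $|x|^{p_1}|y|^{p_2}$ compares to $|x|^p$, $|y|^p$, and $|xy|^{p/2}$. The key observation is that $(p_1,p_2)$ lies strictly between the ``pure'' exponent pair $(p,0)$ and the ``balanced'' pair $(p/2,p/2)$ in the majorization sense, so $|x|^{p_1}|y|^{p_2}$ should be dominated by a combination of $|x|^p$ (or $|y|^p$) and $|xy|^{p/2}$.

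The main step is a weighted Young's inequality: for any $\eta>0$, write $p_1 = \lambda\cdot\frac{p}{2} + (1-\lambda)\cdot 0$-type splittings are not quite right; instead I would use that $|x|^{p_1}|y|^{p_2}$ with $p_1<p_2$ can be written as a product and apply Lemma~\ref{wamgm} (weighted AM-GM) with weights chosen so that the geometric mean equals $|x|^{p_1}|y|^{p_2}$ and the arithmetic side is a convex combination of $|xy|^{p/2}$ and $|x|^p$. Concretely, since $p_1 < p/2 < p_2$, there is a unique $t\in(0,1)$ with $p_1 = t\cdot\frac{p}{2}$ ... this needs $p_1<p/2$, which holds iff $p_1<p_2$; then one checks the $y$-exponents match. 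Applying weighted AM-GM with an inserted scaling parameter $q$ — replacing $|x|$ by $q^{a}|x|$ and $|y|$ by $q^{-b}|y|$ for suitable $a,b$ — produces $|x|^{p_1}|y|^{p_2}\le c(q)|xy|^{p/2} + \frac{1}{c'(q)}|x|^{p}$, where $c(q)\to 0$ and $c'(q)\to\infty$ as the scaling parameter is tuned; symmetrically for the other term with $|y|^p$. Setting $\overline{\alpha}(q) := c(q) + (\text{symmetric term})$ and $\underline{\alpha}(q):=c'(q)$ and then monotonizing/saturating them to genuine $\mathcal{K}_\infty$ functions (by replacing with $\min$/$\max$ against $\max(q,q^{-1})$-type envelopes to guarantee the $\mathcal{K}_\infty$ properties at $0$ and $\infty$) gives \eqref{amgm}.

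The hard part will be bookkeeping the exponents of the scaling parameter $q$ so that the coefficient of $|xy|^{p/2}$ is small and that of $|x|^p,|y|^p$ is controlled \emph{simultaneously} — i.e., verifying one can push weight onto the $|xy|^{p/2}$ term at a rate that makes $\overline{\alpha}\in\mathcal{K}_\infty$ while $1/\underline{\alpha}\to 0$, with both genuinely of class $\mathcal{K}_\infty$ in $q$ (vanishing at $q=0^+$ after the appropriate reparametrization and unbounded as $q\to\infty$). A clean way to sidestep delicate constants is: fix any $\epsilon>0$, apply Young's inequality $ab\le \frac{\epsilon}{r}a^r + \frac{1}{\epsilon^{s/r}s}b^s$ with conjugate exponents $r = \frac{p}{2p_1}$ (so $r>1$ since $p_1<p/2$) and $s=\frac{r}{r-1}$ to the factorization $|x|^{p_1}|y|^{p_2} = \big(|xy|^{p/2}\big)^{?}\cdot\big(|x|^{p}\big)^{?}$ after checking the exponent algebra closes, then set $q\mapsto 1/\epsilon$ and define $\underline{\alpha},\overline{\alpha}$ accordingly, finally passing to $\mathcal{K}_\infty$ representatives. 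Summing the two symmetric bounds and collecting the $|x|^p$, $|y|^p$, $|xy|^{p/2}$ coefficients yields the stated form.
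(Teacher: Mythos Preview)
Your approach is essentially the paper's: reduce by symmetry to bounding a single term $|x|^{p_1}|y|^{p_2}$, factor it as a weighted geometric mean of $|xy|^{p/2}$ and a pure $p$-th power, and apply Lemma~\ref{wamgm} (equivalently, Young's inequality) with a free scaling parameter to trade weight between the two pieces. The paper carries this out by writing, for $p_1>p_2$,
\[
|x|^{p_1}|y|^{p_2}=\bigl(q^{-\frac{p}{p_1-p_2}}|x|^{p}\bigr)^{\frac{p_1-p_2}{p}}\bigl(q^{\frac{p}{2p_2}}|xy|^{\frac{p}{2}}\bigr)^{\frac{2p_2}{p}},
\]
and then weighted AM-GM gives $\underline{\alpha}(q)=\tfrac{p}{p_1-p_2}q^{p/(p_1-p_2)}$ and $\overline{\alpha}(q)=\tfrac{2p_2}{p}q^{p/(2p_2)}$, both already $\mathcal{K}_\infty$ with no monotonization needed; the $p_1=p_2$ case is handled separately with $\underline{\alpha}(q)=2q$, $\overline{\alpha}(q)=q/4$.

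There are, however, two concrete slips in your plan that would make the stated steps fail as written. First, under your convention $p_1<p_2$ the factorization $|x|^{p_1}|y|^{p_2}=(|xy|^{p/2})^{a}(|x|^{p})^{b}$ forces $b=(p_1-p_2)/p<0$, so neither AM-GM nor Young applies; you must pair $|x|^{p_1}|y|^{p_2}$ with $|y|^{p}$ (and the symmetric term with $|x|^{p}$), or equivalently take WLOG $p_1>p_2$ as the paper does. Second, your description of the trade-off is inverted in places: you say you want the coefficient of $|xy|^{p/2}$ small while $|x|^p,|y|^p$ are ``controlled'', and also that $c(q)\to 0$; but for $\overline{\alpha}\in\mathcal{K}_\infty$ one has $\overline{\alpha}(q)\to\infty$ as $q\to\infty$, and the useful regime (both here and in the paper's application) is to make $1/\underline{\alpha}(q)$ small at the price of $\overline{\alpha}(q)$ large, not the other way around. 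Once these two points are corrected, your outline collapses to exactly the paper's argument.
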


\vspace{0.1cm}
The next Lemma follows directly by computation:

\vspace{0.1cm}
\begin{lemma}
Let $\tilde{V}$ and $\tilde{W}$ be given by \eqref{definitionsproof}, where $V$ and $W$ come from \eqref{lyapunovfunctions}. Then, the following inequalities hold:
\begin{subequations}
\begin{align*}
    &\tilde{V}(x)^2\ge r_1|x|^{2-\xi_1}+r_2|x|^{2-\xi_2}+r_3|x|^{2-\frac12(\xi_1+\xi_2)}\\
     &\tilde{W}(x, y)^2\ge r_1|y|^{2-\xi_1}+r_2|y|^{2-\xi_2}+r_3|y|^{2-\frac12(\xi_1+\xi_2)}\\
     &\tilde{V}(x)\tilde{W}(x, y)\ge r_1|x|^{1-\frac{\xi_1}{2}}|y|^{1-\frac{\xi_1}{2}}+r_2|x|^{1-\frac{\xi_2}{2}}|y|^{1-\frac{\xi_2}{2}}\notag\\&~~+\frac{r_3}{2}\left(|x|^{1-\frac{\xi_1}{2}}|y|^{1-\frac{\xi_2}{2}}+|x|^{1-\frac{\xi_2}{2}}|y|^{1-\frac{\xi_1}{2}}\right)
 \end{align*}
\end{subequations}
where $r_1=2^{\frac{\xi_1}{2}-1}\underline{\lambda}(Q)^{1-\frac{\xi_1}{2}}, r_2=2^{\frac{\xi_2}{2}-1}\underline{\lambda}(Q)^{1-\frac{\xi_2}{2}}$ and $ r_3=2^{\frac14(\xi_1+\xi_2)}\underline{\lambda}(Q)^{1-\frac14(\xi_1+\xi_2)}$. \QEDB 
\end{lemma}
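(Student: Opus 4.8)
The plan is to prove all three bounds by direct expansion, exploiting that $\tilde V(x)^2$, $\tilde W(x,y)^2$ and $\tilde V(x)\tilde W(x,y)$ are each a finite sum of pure powers of $V$ and $W$, and that the quadratic forms admit the elementary lower bounds $V(x)\ge\frac12\underline{\lambda}(Q)|x|^2$ and $W(y)\ge\frac12\underline{\lambda}(Q)|y|^2$, which hold because $Q\succ0$. Throughout I would use the identity $\big(\frac12\underline{\lambda}(Q)\big)^{\theta}=2^{-\theta}\underline{\lambda}(Q)^{\theta}$ together with the values $a_i=b_i=1-\frac{\xi_i}{2}$.

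For the first inequality, expanding the square in \eqref{definitionsproof} gives $\tilde V(x)^2=V(x)^{a_1}+2V(x)^{(a_1+a_2)/2}+V(x)^{a_2}$, and analogously for $\tilde W$. The crucial observation is that every exponent occurring here is strictly positive: $a_1=1-\frac{\xi_1}{2}\in(\frac12,1)$, $a_2=1-\frac{\xi_2}{2}>1$, and $\frac{a_1+a_2}{2}=1-\frac{\xi_1+\xi_2}{4}>0$ — this is exactly where the hypotheses $\xi_1\in(0,1)$ and $\xi_2<0$ enter. Hence $t\mapsto t^{\theta}$ is nondecreasing on $[0,\infty)$ for each such $\theta$, so applying the lower bound on $V$ term by term yields $V^{a_1}\ge 2^{\xi_1/2-1}\underline{\lambda}(Q)^{1-\xi_1/2}|x|^{2-\xi_1}=r_1|x|^{2-\xi_1}$, $V^{a_2}\ge r_2|x|^{2-\xi_2}$, and $2V^{(a_1+a_2)/2}\ge 2^{(\xi_1+\xi_2)/4}\underline{\lambda}(Q)^{1-(\xi_1+\xi_2)/4}|x|^{2-(\xi_1+\xi_2)/2}=r_3|x|^{2-(\xi_1+\xi_2)/2}$, where the factor $2$ in front of the mixed power is precisely what turns $2^{(\xi_1+\xi_2)/4-1}$ into the constant $r_3$. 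Summing gives the first displayed bound, and the second is verbatim with $x$ replaced by $y$, since $W$ is the same quadratic form in $y$.

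For the cross term I would expand $\tilde V(x)\tilde W(x,y)=\big(V^{a_1/2}+V^{a_2/2}\big)\big(W^{a_1/2}+W^{a_2/2}\big)$ into the four products $V^{a_1/2}W^{a_1/2}$, $V^{a_2/2}W^{a_2/2}$, $V^{a_1/2}W^{a_2/2}$, $V^{a_2/2}W^{a_1/2}$, lower-bound each factor separately using monotonicity of $t\mapsto t^{a_i/2}$ (again $a_i/2>0$) and the quadratic-form bounds, and collect the constants $\big(\frac12\underline{\lambda}(Q)\big)^{(a_i+a_j)/2}$ exactly as above. The two diagonal products produce the $r_1$- and $r_2$-terms, while each off-diagonal product contributes $\big(\frac12\underline{\lambda}(Q)\big)^{(a_1+a_2)/2}=\frac{r_3}{2}$ times the respective mixed monomial, giving the $\frac{r_3}{2}\big(|x|^{1-\xi_1/2}|y|^{1-\xi_2/2}+|x|^{1-\xi_2/2}|y|^{1-\xi_1/2}\big)$ term. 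There is no genuine obstacle here beyond careful bookkeeping: one only needs to verify that every fractional exponent invoked is nonnegative so that the power map is monotone, and to track the constants so that each $\big(\frac12\underline{\lambda}(Q)\big)^{\theta}$ collapses to the stated $r_i$; everything else is immediate from $Q\succ0$.
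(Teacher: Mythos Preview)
Your proposal is correct and is precisely the ``direct computation'' the paper alludes to: expand the squares and the product, apply the quadratic lower bound $V\ge\tfrac12\underline{\lambda}(Q)|x|^2$ (and the analogous one for $W$) termwise using monotonicity of $t\mapsto t^{\theta}$ for the positive exponents $a_1$, $a_2$, $(a_1+a_2)/2$, $a_i/2$, and collect the resulting constants into $r_1,r_2,r_3$. There is nothing to add.
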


Now we are ready to show that system \eqref{gradsp} satisfies the conditions of Theorem \ref{thm1}. Fix $\xi_1$ and $\xi_2$, and let $\underline{\alpha}, \overline{\alpha}$ be obtained such that items 1) and 2) of Lemma \ref{lem2} hold for $p_1=1,\ p_2=1-\xi_i$ for $i=1,2$. It is possible to find a single pair $\underline{\alpha}, \overline{\alpha}$ that satisfies both cases ($i=1, 2$) since we could apply Lemma \ref{lem2} to both cases separately and take the $\min, \ \max$ of the obtained $\underline{\alpha},\ \overline{\alpha}$ functions respectively.
Let $\mu\in(0, \frac12 \underline{k})$, and choose $q>0$ such that 
$$\frac{1}{\underline{\alpha}(q)}< \min\left(\frac{\mu r_1}{N_1}, \frac{\mu r_2}{N_2}\right):=\eta.$$
Using Lemma \ref{lem2}, we can upper-bound \eqref{i1bd} as follows:
\begin{align*}
    |I_1|&\le \frac{1}{\underline{\alpha}(q)}\Big[(N_1|x|^{2-\xi_1}+N_2|x|^{2-\xi_2})+(N_1|y|^{2-\xi_1}\\&~~~+N_2|y|^{2-\xi_2})\Big]+N_1 \overline{\alpha}(q) |x|^{1-\frac{\xi_1}{2}}|y|^{1-\frac{\xi_1}{2}}\\&~~~+N_2 \overline{\alpha}(q) |x|^{1-\frac{\xi_2}{2}}|y|^{1-\frac{\xi_2}{2}} \\
    &<\mu \tilde{V}(x)^2+\mu \tilde{W}(x,y)^2+\frac{\overline{\alpha}(q)\mu}{\eta}\tilde{V}(x)\tilde{W}(x,y).
\end{align*}
\begin{figure}[t!]
  \centering
    \includegraphics[width=0.42\textwidth]{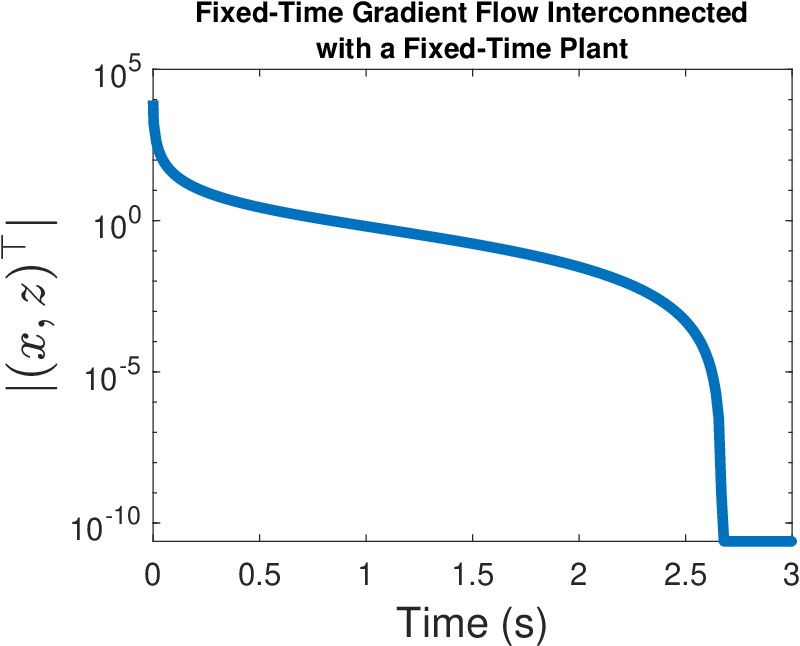}
    \caption{Trajectories of system \eqref{gradsp} with $\varepsilon=0.001$.} \label{fxt1plot}
    \vspace{-0.4cm}
\end{figure}
Similarly, with the same choice of $q$ we can continue from \eqref{i2bd} by exploiting symmetry and obtain the following:
\begin{align*}
    |I_2|&<\mu \tilde{V}(x)^2+\mu^*\tilde{W}(x,y)^2+\frac{\overline{\alpha}(q)\mu}{\eta}\tilde{V}(x)\tilde{W}(x,y),
\end{align*}
where $\mu^*=\mu+k\max\left(\frac{\overline{\lambda}(Q)^{2-\xi_1}}{r_1},\frac{\overline{\lambda}(Q)^{2-\xi_2}}{r_2}\right)$. Therefore, to use Theorem \ref{thm1}, we set $\delta_1=c_1=\delta_2=\mu$, $c_2=\mu^*$, and  $\chi_1=\chi_2=\frac{\overline{\alpha}(q)\mu}{\eta}$. Since $\delta_1<\frac12 \underline{k}$, the conditions of Theorem \ref{thm1} are satisfied, and thus we conclude that the origin of \eqref{gradsp} is fixed-time stable for positive definite quadratic $\phi$ and $\varepsilon$ sufficiently small.
\subsection{Numerical results}
To illustrate the fixed-time stability properties of system \eqref{gradsp}, we consider a numerical example where the quadratic cost $\phi$ has the form $\phi(x)=\frac12 x^\top Q x$, with $Q=[3,2;3,5]$. The parameters of the dynamics are $\xi_1=\frac{1}{3}, \xi_2=-\frac{2}{3}$, and $k=1$. In this case, Lemma \ref{lem2} holds with $\underline{\alpha}(q)=2q,\ \overline{\alpha}(q)=q$ and we can choose $\mu\in(0, \frac12 \underline{k})$ with $\underline{k}=\min(0.359, 0.453)=0.359$. In particular, we choose $\mu=0.1$, which results in $\eta\approx0.0002$. To obtain $\frac{1}{2q}<\eta$ we let $q=3000$. With these values we obtain $\delta_1=c_1=\delta_2=0.1$, $c_2=\mu^*\approx262.6$, and  $\chi_1=\chi_2=\frac{q\mu}{\eta}\approx 1500000$. We pick $\theta=\frac23$, and the matrix $P$ becomes
$$P\approx\begin{bmatrix}
    .02&-750000\\-750000&\frac{0.09}{2\varepsilon}-87
\end{bmatrix}.$$ 
It can be verified that $P\succ0$ for $\varepsilon\in(0, 10^{-15})$. Note that this is a very conservative estimate of $\varepsilon^*$. Indeed, Figure \ref{fxt1plot} shows that the origin is fixed-time stable for significantly larger values of $\varepsilon$ (in the plot we use $\varepsilon=0.001$). The fact that the Lyapunov-based analysis provides conservative bounds on $\varepsilon$ also emerges in the context of asymptotic and exponential stability \cite[Ch. 11.5]{khalil}. However, the power of Theorem \ref{thm1} is to simplify the stability analysis of the nonlinear system \eqref{gradsp}, and to guarantee the \emph{existence} of a feasible $\varepsilon^*$.
\section{High-Order Nonlinear Example}\label{additionalexample}
To further illustrate the strength Theorem \ref{thm1}, and inspired by \cite[Ex. 2]{8550571}, we consider a second-order nonlinear system with ``fixed-time parasitic'' dynamics, given by
\begin{subequations}\label{example2}
    \begin{align}
        \dot{x}_1&=-\bigl \lceil x_1\bigr \rfloor ^{\xi_1}-x_1^3+z\\
        \dot{x}_2&=-\bigl \lceil z\bigr \rfloor ^{\xi_1}-z^3-x_1\\
        \varepsilon \dot{z}&=-\bigl \lceil z-x_2\bigr \rfloor ^{\xi_2}-(z-x_2)^3,
    \end{align}
\end{subequations}
where $\bigl \lceil \cdot\bigr \rfloor^\nu=|\cdot|^\nu \text{sign}(\cdot)$, $x_1,x_2,z\in\mathbb{R}$, $\xi_1, \xi_2\in(0,1)$ and $\xi_1\ge\xi_2$. The parasitic dynamics of this system has a quasi-steady state $h(x)=x_2$.  Using $y=z-h(x)$, we obtain (in the $\tau$-time scale) the boundary layer dynamics:
\begin{equation*}
    \frac{dy}{d\tau}=-\bigl\lceil y\bigr \rfloor^{\xi_2}-y^3,
\end{equation*}
and the following reduced dynamics:
\begin{subequations}
        \begin{equation*}
        \dot{x}_1=-\bigl \lceil x_1\bigr \rfloor ^{\xi_1}-x_1^3+x_2
    \end{equation*}
    \begin{equation*}
        \dot{x}_2=-\bigl \lceil x_2\bigr \rfloor ^{\xi_1}-x_2^3-x_1
    \end{equation*}
\end{subequations}
It can be verified that the Lyapunov functions $V(x)=\frac12 (x_1^2+x_2^2)$ and $W(y)=\frac12 y^2$ satisfy Assumption \ref{assumption1} and inequalities \eqref{redblfxt} with $k_1=k_2=1, a_1=\frac12(\xi_1+1), b_1=\frac12 (\xi_2+1), a_2=b_2=2, \kappa_1=2^{\frac12(\xi_2+1)}$, and $ \kappa_2=4 $. Thus, using \eqref{definitionsproof}:
\begin{subequations}
\begin{align*}
        \tilde{V}(x)&=V^{\frac14(\xi_1+1)}+V\\
        \tilde{W}(y)&=W^{\frac14(\xi_2+1)}+W.
\end{align*}
\end{subequations}
For this system, the interconnection terms are given by
\begin{subequations}
\begin{align}
    I_1&=yx_1-x_2\bigl \lceil y+x_2\bigr \rfloor^{\xi_1}-x_2(y+x_2)^3+x_2^4+x_2 \bigl \lceil x_2\bigr \rfloor^{\xi_1}\label{ex2i1}\\
    I_2&=y\bigl \lceil y+x_2\bigr \rfloor^{\xi_1}+y(y+x_2)^3+yx_1\label{ex2i2}
\end{align}
\end{subequations}
Before we continue, we first need the following Lemma:

\vspace{0.1cm}
\begin{lemma}\label{lemmapowers5}
    For any $x,y\in\mathbb{R}$ and $\xi\in(0,1)$, we have that $x\left(\bigl \lceil x\bigr \rfloor^{\xi}-\bigl \lceil y+x\bigr \rfloor^{\xi}\right)\le 2|x||y|^{\xi}$. \QEDB 
\end{lemma}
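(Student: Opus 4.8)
The plan is to reduce the inequality $x\left(\lceil x\rfloor^{\xi}-\lceil y+x\rfloor^{\xi}\right)\le 2|x||y|^{\xi}$ to a one-dimensional statement about the scalar function $t\mapsto \lceil t\rfloor^{\xi}=|t|^{\xi}\operatorname{sign}(t)$, namely a Hölder-type modulus-of-continuity bound
\begin{equation*}
\left\lvert \lceil a\rfloor^{\xi}-\lceil b\rfloor^{\xi}\right\rvert\le 2|a-b|^{\xi}\qquad\text{for all }a,b\in\mathbb{R},\ \xi\in(0,1).
\end{equation*}
Indeed, once this is available, setting $a=x$ and $b=x+y$ and applying Cauchy--Schwarz (here just $|x\cdot c|\le |x||c|$ in one dimension) gives $x\left(\lceil x\rfloor^{\xi}-\lceil y+x\rfloor^{\xi}\right)\le |x|\cdot\left\lvert\lceil x\rfloor^{\xi}-\lceil x+y\rfloor^{\xi}\right\rvert\le 2|x||y|^{\xi}$, which is exactly the claim. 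So the whole problem is the scalar Hölder bound.

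To prove the scalar bound I would split into cases according to the signs of $a$ and $b$. If $a$ and $b$ have the same sign (or one is zero), then WLOG both are $\ge 0$, and the bound becomes $\left\lvert a^{\xi}-b^{\xi}\right\rvert\le |a-b|^{\xi}$ (even with constant $1$, not $2$); this is the standard subadditivity/concavity fact for $t\mapsto t^{\xi}$ on $\mathbb{R}_{\ge 0}$, since assuming $a\ge b\ge 0$ we have $a^{\xi}=(b+(a-b))^{\xi}\le b^{\xi}+(a-b)^{\xi}$ by concavity of $t^\xi$ with $t^\xi|_{t=0}=0$ (equivalently subadditivity of a concave function vanishing at the origin). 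If $a$ and $b$ have opposite signs, say $a\ge 0>b$, then $\lceil a\rfloor^{\xi}-\lceil b\rfloor^{\xi}=a^{\xi}+|b|^{\xi}$, and since $|a-b|=a+|b|$ with $a,|b|\ge 0$, I again invoke subadditivity of $t\mapsto t^{\xi}$: $a^{\xi}+|b|^{\xi}$ versus $(a+|b|)^{\xi}$ — here the inequality goes the \emph{wrong} way ($a^{\xi}+|b|^{\xi}\ge (a+|b|)^{\xi}$), so this is precisely where the factor $2$ is needed. The clean way is to bound $a^{\xi}\le (a+|b|)^{\xi}$ and $|b|^{\xi}\le (a+|b|)^{\xi}$ separately (monotonicity of $t^\xi$), giving $a^{\xi}+|b|^{\xi}\le 2(a+|b|)^{\xi}=2|a-b|^{\xi}$, and the same trick also covers the same-sign case uniformly. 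So in fact a single argument works: $\left\lvert\lceil a\rfloor^\xi-\lceil b\rfloor^\xi\right\rvert\le |a|^\xi+|b|^\xi$ always (triangle inequality plus $|\lceil t\rfloor^\xi|=|t|^\xi$), but this is not yet in terms of $|a-b|$, so the opposite-sign bound via monotonicity is the one to keep, combined with the same-sign concavity bound.

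There is essentially no obstacle here — the lemma is elementary — but the one point requiring a moment's care is the bookkeeping in the mixed-sign case: one must not try to apply concavity/subadditivity "across" the origin, because $\lceil\cdot\rfloor^\xi$ is not concave on all of $\mathbb{R}$ (it has an inflection at $0$), and that is exactly why the constant $2$ rather than $1$ appears. The safe route, as above, is to dominate each of the two nonnegative pieces $|a|^\xi$, $|b|^\xi$ individually by $|a-b|^\xi$ using monotonicity whenever $a,b$ straddle the origin, and to use the sharper concavity bound (constant $1$) when they are on the same side; taking the worse of the two constants gives the stated bound with $2$ in all cases. Finally I would remark that the $\mathbb{R}^N$ statement claimed in the lemma follows verbatim by the Cauchy--Schwarz step described in the first paragraph, applied coordinate-free with $|\cdot|$ the Euclidean norm, since $\lceil v\rfloor^\xi:=|v|^\xi\,v/|v|$ satisfies $\langle x,\lceil x\rfloor^\xi-\lceil x+y\rfloor^\xi\rangle\le |x|\,\bigl\lvert\lceil x\rfloor^\xi-\lceil x+y\rfloor^\xi\bigr\rvert$ and the scalar-type norm bound $\bigl\lvert\,|x|^\xi\tfrac{x}{|x|}-|x+y|^\xi\tfrac{x+y}{|x+y|}\bigr\rvert\le 2|y|^\xi$ is proved by the same case analysis on whether $\langle x,x+y\rangle$ is positive or negative, mirroring the treatment of $\Upsilon_1$ in Lemma \ref{lem1}.
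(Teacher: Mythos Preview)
Your proof is correct and follows essentially the same line as the paper's: both reduce via $x\cdot c\le |x|\,|c|$ and then split according to whether $x$ and $x+y$ share a sign, using subadditivity of $t\mapsto t^{\xi}$ in the same-sign case (the paper does this by a derivative computation) and monotonicity in the opposite-sign case (the paper parametrizes $y=-nx$, $n>1$, to reach the same conclusion). Your closing paragraph about an $\mathbb{R}^N$ extension is unnecessary here, since the lemma is stated and used only for scalars.
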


\vspace{0.1cm}

\vspace{0.1cm}
By leveraging Lemma \ref{lemmapowers5}, we can expand the term $(y+x_2)^3$ in \eqref{ex2i1} to bound $I_1$:
\begin{align}\label{ex2i1bd2}
    I_1\le |y||x_1|+2|x_2||y|^{\xi_1}+|y|^3 |x_2|+3|y||x_2|^3.
\end{align}
It is then straightforward to verify $|y||x_1|\le 4\tilde{W}(y) \tilde{V}(x)$. We can apply Lemma \ref{lem2} on $|x_2||y|^{\xi_1}$ and $|y|^3 |x_2|+|y||x_2|^3$ to obtain $\underline{\alpha}, \overline{\alpha}\in\mathcal{K}_\infty$ such that:
\begin{equation*}
    |x_2||y|^{\xi_1}\le \frac{1}{\underline{\alpha}(q)}|x_2|^{1+\xi_1}+\overline{\alpha}(q)|x_2 y|^{\frac{1+\xi_1}{2}}+\frac{1}{\underline{\alpha}(q)}|y|^{1+\xi_1}
\end{equation*}
and
\begin{align*}
    |y|^3 |x_2|+|y||x_2|^3\le \frac{1}{\underline{\alpha}(q)}|x_2|^4+\overline{\alpha}(q)|x_2 y|^2+\frac{1}{\underline{\alpha}(q)}|y|^4
\end{align*}
hold $\forall \ q>0$. Let $\mu\in(0,\frac12 \underline{k})$, where $\underline{k}=\min(k_1, k_2)=1$. Choose $q$ sufficiently large such that 
$$\frac{1}{\underline{\alpha}(q)}<\min\left(\frac12\left(2^{-\frac12(\xi_1+1)}\mu\right), \frac13\left(\frac{\mu}{4}\right)\right)=\frac{\mu}{12}.$$
We can then upper-bound \eqref{ex2i1bd2} as follows:
\begin{align*}
    I_1&\le |y||x_1|+2|x_2||y|^{\xi_1}+3\left(|y|^3 |x_2|+|y||x_2|^3\right)\\
    &\le 4\tilde{W}(y) \tilde{V}(x)+\frac{1}{\underline{\alpha}(q)}\left(2|x_2|^{1+\xi_1}+3|x_2|^4\right)\\&~~~+\overline{\alpha}(q)\left(2|x_2 y|^{\frac12(1+\xi_1)}+3|x_2 y|^2\right)\\&~~~+\frac{1}{\underline{\alpha}(q)}\left(2|y|^{1+\xi_1}+3|y|^4\right)\\
    &\le \mu \tilde{V}(x)^2+\mu \tilde{W}(y)^2\\&~~~+\Bigl(\left(2^{\frac14(9+\xi_1)}+12\right)\overline{\alpha}(q)+4\Bigl)\tilde{V}(x)\tilde{W}(y).
\end{align*}
\begin{figure}[t!]
  \centering
    \includegraphics[width=0.42\textwidth]{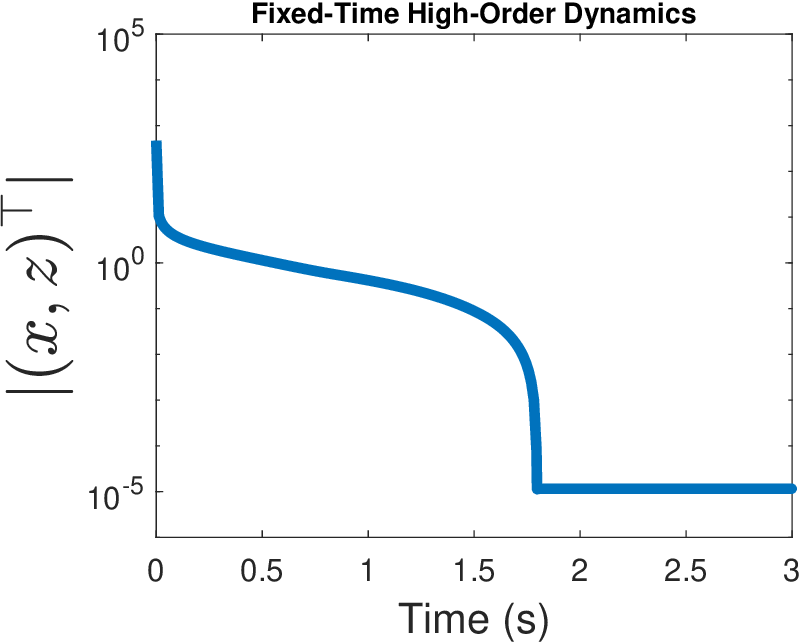}
    \caption{Trajectories of system \eqref{example2} with $\varepsilon=0.001$.}\label{ex2plot}  
    \vspace{-0.4cm}
\end{figure}
Therefore, we conclude that $I_1$ satisfies item (a) of Theorem \ref{thm1}, with $\chi_1=\left(2^{\frac14(9+\xi_1)}+12\right)\overline{\alpha}(q)+4$ and $\delta_1=c_1=\mu$. Note that the condition $\xi_1\ge\xi_2$ is used, as it allows us to state $|y|^\frac{1+\xi_1}{2}\le |y|^\frac{1+\xi_2}{2}+|y|^2\le 2\tilde{W}(y)$.

We will now use a similar technique to bound $I_2$. Using \eqref{ex2i2} we get:
\begin{align*}
    |I_2|&\le |y||y+x_2|^{\xi_1}+y^4+3y^3 x_2+3y^2x_2^2+yx_2^3+yx_1\\&\le |y|^{1+\xi_1}+|y|^4+3|y|^2|x_2|^2+\Big(|y||x_1|+|y||x_2|^{\xi_1}\\&~~~+3|y|^3|x_2|+|y||x_2|^3\Big).
\end{align*}
By exploiting symmetry, we can bound the expression in the parentheses by the same bound used in $I_1$. Also, we have that $|y|^{1+\xi_1}+|y|^4\le 8\tilde{W}(y)^2$ and $3|y|^2|x_2|^2\le 12\tilde{W}(y)\tilde{V}(x)$, which implies that item (a) of Theorem \ref{thm1} is satisfied for $I_2$. Since we have $\delta_1<\frac12 \underline{k}$, the conditions of Theorem \ref{thm1} are satisfied, and thus the origin of \eqref{example2} is fixed-time stable for $\varepsilon$ sufficiently small.

Figure \ref{ex2plot} shows the trajectories of the system \eqref{example2} for the case when $\varepsilon=0.001, \xi_1=\frac13, \xi_2=\frac14, x(0)=(356,241)^\top, z(0)=191$. As observed, the system achieves fixed-time stability.
\section{Conclusions}
\label{sec:conclusions}
We introduce a fixed-time stability result for singularly perturbed dynamical systems based on the composite Lyapunov method. The result establishes that if: 1) the reduced dynamics are fixed-time stable; 2) the boundary-layer dynamics are fixed-time stable 3) the interconnection conditions of Theorem 1 hold, then there exists a sufficiently large time scale separation between the dynamics of $x$ and $z$ such that the origin is fixed-time stable for the interconnected system. Future research directions will explore potential connections between our results and homogeneity properties of the singularly perturbed dynamics. 

\bibliographystyle{IEEEtran}
\bibliography{Biblio,noncoop}
\end{document}